\newcommand{\ones}{\mathbf 1}
\newcommand{\reals}{{\mathbb{R}}}
\newcommand{\argmin}{\mathop{\rm argmin}}
\newcommand{\norm}[1]{\left\lVert#1\right\rVert}
\newcommand{\mnorm}[1]{{\left\vert\kern-0.25ex\left\vert\kern-0.25ex\left\vert #1 
    \right\vert\kern-0.25ex\right\vert\kern-0.25ex\right\vert}}
\newcommand{\mc}{\mathcal}
\newtheorem{definition}{Definition} 
\newtheorem{theorem}{Theorem}
\newtheorem{corollary}{Corollary}
\newtheorem{remark}{Remark}
\newtheorem{proposition}{Proposition}
\newtheorem{assumption}{Assumption}
\newtheorem{example}{Example}
\algnewcommand\algorithmicforeach{\textbf{for each}}
\renewcommand*\Call[2]{\textproc{#1}(#2)}
\newcommand{\change}[1]{#1} 
\newcommand\copyrighttext{%
  \footnotesize Copyright (c) 2022 IEEE. Personal use is permitted. For any other purposes, permission must be obtained from the IEEE by emailing pubs-permissions@ieee.org2022 IEEE. 
This is the author's version of an article that has been published in this journal. Changes were made to this version by the publisher prior to publication.
The final version of record is available at \href{http://dx.doi.org/10.1109/LCSYS.2020.300124}{10.1109/LCSYS.2020.3001240}}
\newcommand\copyrightnotice{%
\begin{tikzpicture}[remember picture,overlay]
\node[anchor=south,yshift=5pt] at (current page.south) {\fbox{\parbox{\dimexpr\textwidth-\fboxsep-\fboxrule\relax}{\copyrighttext}}};
\end{tikzpicture}%
}
\begin{document}
\pagestyle{empty} 
\title{\Large \bf Congestion-aware path coordination game with Markov decision process dynamics}
\author{Sarah H. Q. Li$^{1}$, Daniel Calderone$^{1}$, Beh\c cet\ A\c c\i kme\c se$^{1}$
\thanks{$^{1}$Authors are with the William E. Boeing  Department of Aeronautics and Astronautics, University of Washington, Seattle. 
        {\tt\small sarahli@uw.edu}
        {\tt\small djcal@uw.edu }
        {\tt\small behcet@uw.edu}}%
}

\maketitle
\copyrightnotice
\thispagestyle{empty} 
\begin{abstract}
Inspired by the path coordination problem arising from robo-taxis, warehouse management, and mixed-vehicle routing problems, we model a group of heterogeneous players responding to stochastic demands as a congestion game under Markov decision process dynamics.
Players share a common state-action space but have unique transition dynamics, and each player's unique cost is a \emph{function} of the joint state-action probability distribution. 
 \change{For a class of player cost functions, we formulate the player-specific optimization problem, prove equivalence between the Nash equilibrium and the solution of a potential minimization problem, and derive dynamic programming approaches to solve for the Nash equilibrium. We apply this game to model multi-agent path coordination and introduce congestion-based cost functions that enable players to complete individual tasks while avoiding congestion with their opponents. }
 Finally, we present a learning algorithm for finding the Nash equilibrium that has linear complexity in the number of players. We demonstrate our game model on a multi-robot warehouse \change{path coordination problem}, in which robots autonomously retrieve and deliver packages while avoiding congested paths. 
\end{abstract}
\section{Introduction}\label{sec:introduction}
As autonomous \change{path} planning algorithms become widely-adapted by aeronautical, robotics, and operational sectors~\cite{yun2020multi,ota2006multi}, the standard underlying assumption that the operating environment is stationary is no longer sufficient. More likely, autonomous players \emph{share} the operating environment with other players who may have conflicting objectives. While the possibility for multi-agent conflicts has pushed single-agent \change{path} planning towards greater emphasis on \change{robust planning} and collision avoidance, we believe that the overarching goal should be to consider other players' trajectories and achieve optimality with respect to the \emph{multi-agent dynamics}. 

We focus on the scenario where a group of heterogeneous players collectively perform path planning in response to stochastic demands. We are inspired by fleets of robo-taxis fulfilling ride demands while avoiding congestion in traffic~\cite{vosooghi2019robo} and warehouse robots retrieving packages under dynamic arrival rates~\cite{kumar2018development,li2020mechanism} while avoiding collisions. The common feature in these applications is that the players must plan with respect to a forecasted demand distribution rather than a deterministic demand. We assume that the desirable outcome is a competitive equilibrium. Beyond competitive settings, a competitive equilibrium can be used in cooperative settings to ensure that each player achieves identical costs and each demand is \emph{optimally} fulfilled with respect to other demands, thus ensuring a degree of fairness.

We propose MDP congestion games as \change{a theoretical framework for analyzing the resulting path coordination problem}. By leveraging common congestion features in multi-agent \change{path} planning, our key contribution is \emph{reducing} the $N$-player coupled MDP problem to a \emph{single} potential minimization problem. As a result, we can use optimization techniques to analyze the Nash equilibrium as well as apply gradient descent methods to compute it.



\textbf{Contributions}. To address the lack of \change{game-theoretical models for} path \change{coordination under} MDP dynamics, we propose an MDP congestion game with finite players and heterogeneous \change{player} costs and dynamics. We define Bellman equation-type conditions for the Nash equilibrium, formulate a \emph{potential function} and provide a necessary and sufficient condition for its existence. \change{Under certain assumptions on the player costs}, we show equivalence between the \change{Nash equilibrium} and the \change{global solution of} the potential minimization problem, and provide sufficient conditions for a unique Nash equilibrium. Specifically \change{for multi-player path coordination}, we \change{formulate a class of cost functions that allows players to have different sensitivities to the total congestion and to find congestion-free paths that optimally achieve their individual objectives}. Finally, we provide a \change{distributed} algorithm that converges to the Nash equilibrium and give rates of \change{its} convergence. We demonstrate our model and algorithm on a 2D autonomous warehouse problem where robots retrieve and deliver packages \change{with stochastic arrival times} while sharing a common navigation space.

\section{Related work}
\label{sec:related work}

An MDP congestion game~\cite{calderone2017infinite} is a stochastic population game and is related to potential mean field games \cite{lasry2007mean,gueant2011infinity} in the discrete time and state-action space \cite{gomes2010discrete} and mean field games on graphs \cite{gueant2015existence}. 
\change{In this paper, we extend our previous framework from continuous populations of identical MDP decision makers~\cite{calderone2017infinite} to a finite number of heterogenous MDP decision makers.} In the continuous population case, MDP congestion games have been analyzed for constraint satisfaction in~\cite{li2019tolling} and sensitivity to \change{hyper}parameters in~\cite{li2019sensitivity}.  

Model-based multi-agent \change{path} planning is typically solved via graph-based searches~\cite{cohen2019optimal} and mixed integer linear programming~\cite{chen2020scalable}. Recently, reinforcement learning has been introduced as a viable method for solving multi-agent \change{path} planning~\cite{semnani2020multi,yun2020multi}. In most scenarios, the \change{path} planning problem is modeled as an MDP~\cite{bayerlein2021multi,lo2021towards}. In particular, \cite{lo2021towards} adopts a stochastic game model for human-robot collision avoidance, but focuses more on algorithm development rather than game structure analysis. 

\section{Heterogeneous MDP Congestion Game}\label{sec:notation}
Consider a finite number of players $[N] = \{1, \ldots, N\}$ with a \emph{shared finite state-action space} given by $([S], [A])$ and common time interval $\mc{T} = \{0, 1, \ldots, T\}$. Each player $i$ has \emph{individual} time-dependent transition probabilities given by $P^i\in \reals_+^{TSSA}$, where at time $t$, $P^i_{ts'sa}$ is the transition probability from state $s$ to state $s'$ using action $a$ \change{satisfying} the simplex constraints:
\begin{equation}
    \textstyle\sum_{s'} P^i_{t s' s a} = 1, \ \forall (i, t, s,  a) \in [N]\times \change{[T]}\times [S] \times [A].
\end{equation}
\textbf{State-action distribution}. \change{At time $t$, let player $i$'s state be $s^i(t) \in [S]$ and action taken be $a^i(t) \in [A]$, then $x^i_{tsa} = \mathbb{P}[s^i(t) = s, a^i(t) = a]$ is  player $i$'s probability of being in state $s$ taking action $a$ at time $t$.}
\change{Player $i$'s state-action probability trajectory over time period $\mc{T}$ is $x^i \in\reals^{(T+1)SA}$, its \emph{state-action distribution}.} \change{We use $\mc{X}(P^i, z^i_0)$ to denote the set of all feasible state-action distributions under transition dynamics $P^i$ and initial condition $z^i_0 \in\reals^S_+$, where $z^i_{0s}= \mathbb{P}[s^i(0) = s]$ is player $i$'s probability of starting in state $s$.}
\begin{multline}\label{eqn:feasible_mdp_flows}
\textstyle\mc{X}(P^i, \change{z^i_{0}}):= \Big\{\change{x^i} \in \reals_+^{(T+1)SA} \Bigg| \sum_{a} \change{x^i_{0sa} = z^i_{0s}}, \forall s \in [S], \Big. \\ 
\textstyle\Big.\sum_{s', a} P^i_{tss'a} \change{x^i_{(t-1)s'a}} = \sum_{a} \change{x^i_{tsa}}, \ \forall (t,s) \in  \change{[T]}\times [S] \Big\}.
\end{multline}
The \emph{joint state-action distribution} of all players is given by
\begin{equation}\label{eqn:joint_state_action}
    x = (x^1, \ldots, x^N) \in \reals_+^{N(T+1)SA}.
\end{equation}
\change{We assume that $x$ is fully observable and may denote it as $x = (x^i, x^{-i})$ where $x^{-i} = (x^j)_{j \in [N]/\{i\}}$.}

\noindent\textbf{\change{Player} costs}. \change{Similar to stochastic games, the player} costs are continuously differentiable \emph{functions} of $x$: player $i$ incurs a cost $\ell^i_{tsa}(x)$ for taking action $a$ at state $s$ and time $t$.
\begin{equation}\label{eqn:player_cost_def}
    \ell^i_{tsa}: \reals_+^{N(T+1)SA} \mapsto \reals, \ \forall (i,t,s,a) \in [N]\times \mc{T} \times [S] \times [A].
\end{equation}
\change{Compared to stochastic games where player costs are coupled to the opponent policies,~\eqref{eqn:player_cost_def} is better suited to model collision events. For example, the expectation of the log-barrier function for players $i$ and $j$ at time $t$ can be modeled as $\sum_{s, s' \in [S]} (\sum_a x^i_{tsa})(\sum_a x^j_{ts'a})\log(d_{s,s'})$, in which $d_{s,s'}$ denotes the distance between states $s, s'\in [S]$. }

\noindent The \emph{cost vector} of $(\ell^1,\ldots \ell^N)$~\eqref{eqn:player_cost_def} is given by $\xi:\reals_+^{N(T+1)SA}\mapsto \reals_+^{N(T+1)SA}$,
\begin{equation}\label{eqn:cost_vector}
    \xi(x) = [\ell^1_{011}(x), \ell^1_{012}(x), \ldots, \ell^N_{TSA}(x)] \in \reals_+^{N(T+1)SA}.
\end{equation}
We assume that $\xi$ has a positive definite gradient in $x$. 
\begin{assumption}\label{assum:positive_definite} The player cost vector $\xi$~\eqref{eqn:cost_vector} satisfies $\nabla \xi(x) \succ 0$ for all $x$~\eqref{eqn:joint_state_action} where $x^i \in \mc{X}(P^i, x_0^i), \ \forall i \in [N]$.
\end{assumption}
\noindent For the class of player costs considered in Section~\ref{sec:path_coordination_costs}, Assumption~\ref{assum:positive_definite} implies that the player costs strictly increase as the number of players increases.

\noindent\textbf{Coupled MDPs}. Given an initial distribution $z_0^i \in \reals^S_+$ and fixed state-action distributions \change{$x^{-i}$}~\eqref{eqn:joint_state_action}, player $i$ solves the following optimization problem under MDP dynamics.
\begin{equation}
\begin{aligned}\label{eqn:individual_player_mdp}
     \underset{x^i}{\mbox{min}} & \change{\sum_{t, s, a}\int_0^{x^i_{tsa}}\ell^i_{tsa}(u^i, x^{-i}) \partial u^i_{tsa}} \ \mbox{s.t. } & x^i \in \mc{X}(P^i, z^i_{0}).
\end{aligned}
\end{equation}
\change{In~\eqref{eqn:individual_player_mdp}, we note that each integral is taken over $u^i_{tsa}$, the $(t, s, a)^{th}$ element of $u^i$.} 
When $\ell^i_{tsa}(x)$ is constant for all $(t,s,a) \in \mc{T} \times [S]\times [A]$, player $i$ solves a standard \emph{linear program} MDP.

\noindent\textbf{Dynamic programming}. At a joint state-action distribution $x$~\eqref{eqn:joint_state_action}, player $i$'s cost-to-go in~\eqref{eqn:individual_player_mdp} can be recursively defined via Q-value functions~\cite{puterman2014markov} as
\begin{multline}
   Q^i_{Tsa}(x) := \ell^i_{Tsa}(x), \label{eqn:q_value} \\ 
   Q^i_{(t-1)sa}(x) := \ell^i_{(t-1)sa}(x) + \textstyle\sum_{s'} P^i_{ts'sa}\underset{a'}{\min}\, Q^i_{t,s'a'}(x),\\
\forall \ t \in [T] 
\end{multline}
\noindent\change{
The optimal solution of~\eqref{eqn:individual_player_mdp} can be stated using~\eqref{eqn:q_value}.}

\begin{theorem}
\change{Under Assumption~\ref{assum:positive_definite}, ${x}^i$~\eqref{eqn:feasible_mdp_flows} uniquely minimizes~\eqref{eqn:individual_player_mdp} with respect to the state-action distribution $x^{-i}$ if and only if its associated $Q^i({x}^i, x^{-i})$~\eqref{eqn:q_value} satisfies
\begin{equation}\label{eqn:individual_optimality}
    {x}^i_{tsa} > 0 \Rightarrow \textstyle Q^i_{tsa}({x}^i, x^{-i})= \min_{a'} Q^i_{tsa'}({x}^i, x^{-i}),
\end{equation}
for all $(t, s, a) \in \mc{T}\times[S]\times[A]$.} I.e., ${x}^i$ is optimal for~\eqref{eqn:individual_player_mdp} if and only if every action played with nonzero probability achieves the minimum cost-to-go~\eqref{eqn:q_value} among available actions. 
\end{theorem}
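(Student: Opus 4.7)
The plan is to view~\eqref{eqn:individual_player_mdp} as a convex program with linear constraints and read off the stated characterization from its KKT system. First, I would establish strict convexity of the objective $f^i(x^i) := \sum_{t,s,a}\int_0^{x^i_{tsa}}\ell^i_{tsa}(u^i,x^{-i})\,\partial u^i_{tsa}$ in $x^i$. Since $\nabla_{x^i}f^i(x^i)=\ell^i(x^i,x^{-i})$, the Hessian is the principal block of $\nabla\xi(x)$ indexed by player $i$'s coordinates; zero-padding a test vector into the full $N(T+1)SA$-dimensional space and invoking Assumption~\ref{assum:positive_definite} shows this block is positive definite. Combined with linearity of the constraints defining $\mc{X}(P^i,z_0^i)$, this gives existence, uniqueness, and necessity and sufficiency of the KKT conditions.

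Next, I would introduce multipliers $\mu^i_{ts}$ for the equalities in $\mc{X}(P^i,z_0^i)$ and $\lambda^i_{tsa}\ge 0$ for $x^i_{tsa}\ge 0$. Differentiating the Lagrangian in $x^i_{tsa}$ yields $\lambda^i_{Tsa} = \ell^i_{Tsa}(x)+\mu^i_{Ts}$ at the terminal time, and, for $t<T$, $\lambda^i_{tsa} = \ell^i_{tsa}(x)+\mu^i_{ts}-\sum_{s'}P^i_{(t+1)s'sa}\,\mu^i_{(t+1)s'}$. Setting $\mu^i_{ts}=-\min_{a'}Q^i_{tsa'}(x)$ and proceeding by backward induction using the recursion~\eqref{eqn:q_value}, these identities collapse to $\lambda^i_{tsa}=Q^i_{tsa}(x)-\min_{a'}Q^i_{tsa'}(x)$, which is automatically non-negative. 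With this choice of multipliers, stationarity and dual feasibility hold for any primal-feasible $x^i$, so the KKT system reduces to the complementary-slackness condition $x^i_{tsa}\lambda^i_{tsa}=0$, which is exactly~\eqref{eqn:individual_optimality}. Uniqueness of the minimizer then follows from strict convexity, yielding both directions of the equivalence.

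The main obstacle is careful sign and boundary bookkeeping so that the Lagrange-multiplier recursion matches the $Q^i$-recursion of~\eqref{eqn:q_value}, particularly at $t=T$ and in absorbing the initial-condition multiplier $\mu^i_{0s}$ into the $t=0$ stationarity equation. A secondary concern is justifying the principal-submatrix claim when $\nabla\xi$ is not symmetric, for which the zero-padding argument noted above is essential; once stated cleanly, the rest of the proof is routine convex-analytic manipulation.
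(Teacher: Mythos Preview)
Your proposal follows the same KKT-based route as the paper, and the backward direction (condition~\eqref{eqn:individual_optimality} implies optimality) is handled exactly as in the paper's appendix: set the equality multiplier to $\min_{a'}Q^i_{tsa'}$ and the inequality multiplier to $Q^i_{tsa}-\min_{a'}Q^i_{tsa'}$, then verify stationarity, dual feasibility, and complementary slackness. The gap is in the forward direction. You assert that because your constructed multipliers always satisfy stationarity and dual feasibility, ``the KKT system reduces to complementary slackness''; but optimality of $\hat{x}^i$ only guarantees the existence of \emph{some} multiplier pair satisfying all KKT conditions, not that your particular choice does. Concretely, if $(\tilde{\nu},\tilde{\mu})$ are KKT multipliers obtained from optimality, there is no a~priori reason that $\tilde{\nu}_{ts}=\min_{a'}Q^i_{tsa'}(\hat{x})$ at states $(t,s)$ with $\sum_a\hat{x}^i_{tsa}=0$; the equality multiplier is not uniquely determined there, and so $\tilde{\mu}_{tsa}=0$ does not directly yield $Q^i_{tsa}=\min_{a'}Q^i_{tsa'}$.

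The paper closes precisely this gap with a \emph{shift} argument in Proposition~A\ref{prop:dp_to_minimizer}: starting from arbitrary KKT multipliers $(\nu^i,\mu^i)$, it recursively defines corrections $\Delta^i_{ts}=\min_{a'}\bigl(\mu^i_{tsa'}+\sum_{s'}P^i_{(t+1)s'sa'}\Delta^i_{(t+1)s'}\bigr)$ and sets $\hat{\nu}^i=\nu^i+\Delta^i$, $\hat{\mu}^i_{tsa}=\mu^i_{tsa}+\sum_{s'}P^i_{(t+1)s'sa}\Delta^i_{(t+1)s'}-\Delta^i_{ts}$. A backward induction then shows that the shifted pair still satisfies stationarity and $\hat{\mu}^i\ge 0$, that $\hat{\nu}^i_{ts}+\hat{\mu}^i_{tsa}=Q^i_{tsa}$ with $\hat{\nu}^i_{ts}=\min_{a'}Q^i_{tsa'}$, and---the nontrivial part---that complementary slackness is preserved, because $\hat{x}^i_{(t-1)sa}>0$ forces $\Delta^i_{ts'}=0$ at every successor $s'$ with $P^i_{ts'sa}>0$. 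You will need either this shift or an equivalent device (for instance, linearize at $\hat{x}^i$ and invoke standard LP-MDP optimality for the linear program $\min_{y^i\in\mc{X}}\langle\ell^i(\hat{x}),y^i\rangle$) to complete the forward implication.
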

\begin{proof}
\change{
Let $F({x}^i, x^{-i}) = \sum_{t, s, a}\int_0^{{x}^i_{tsa}}\ell^i_{tsa}(u^i, x^{-i}) \partial u^i_{tsa}$, then $\partial F({x}^i, x^{-i}) /\partial x^i = \ell({x}^i, x^{-i})$. We then apply Proposition A\ref{prop:dp_to_minimizer} to~\eqref{eqn:individual_player_mdp} and the theorem's results follow directly.}
\end{proof}
\noindent\change{
When all players jointly achieve the optimal cost-to-go~\eqref{eqn:individual_optimality}, a stable equilibrium for unilateral optimality is achieved.
}
\begin{definition}[Nash Equilibrium] \label{def:wardrop} 
The joint state-action distribution $\hat{x}= [\hat{x}^1, \ldots, \hat{x}^N]$~\eqref{eqn:joint_state_action}
is a \emph{Nash equilibrium} if \change{$\big(\hat{x}^i, Q^i(\hat{x})\big)$} satisfies~\eqref{eqn:individual_optimality} for all $i \in [N]$.
\end{definition}
\subsection{Potential optimization form}
\noindent We are interested in MDP congestion games that can be reduced from the coupled MDPs~\eqref{eqn:individual_player_mdp} to a single minimization problem given by
\begin{equation}
    \begin{aligned}\label{eqn:convex_opt_eqn}
        \min_{x^1, \ldots, x^N}  F(x), \ 
    \text{s.t. } x^i \in \mc{X}(P^i, z^i_0) , \ \forall \ i \in [N], 
    \end{aligned}
\end{equation}
where $F$ is the \emph{potential function} of the corresponding game.

\begin{definition}[Potential Function]\label{def:potential}
We say an MDP congestion game with {player costs} $\{\ell^{i}\}_{i \in [N]}$~\eqref{eqn:player_cost_def} has a \emph{potential function} $F: \reals^{N(T+1) S A}\mapsto \reals$ if $F$ satisfies
\begin{equation}\label{eqn:potential_first_order}
        \frac{\partial F(x)}{\partial x^i_{tsa}} = \ell^i_{tsa}(x), \ \forall \ (i, t, s, a) \in [N]\times\mc{T}\times[S]\times [A].
    \end{equation}
\end{definition}
\noindent The following assumption on $\{\ell^{i}\}_{i \in [N]}$ is necessary and sufficient for the existence of $F$~\cite[Eqn.2.44]{patriksson2015traffic}.
\begin{assumption}\label{ass:potential_existence}
For all $(i, t, s, a) , (i', t', s', a') \in [N]\times\mc{T}\times[S]\times [A]$, the player costs $\{\ell^{i}\}_{i \in [N]}$ satisfy
    \begin{equation}\label{eqn:potential_second_order}
        \frac{\partial \ell^i_{tsa}(x)}{\partial x^{i'}_{t's'a'}} = \frac{\partial \ell^{i'}_{t's'a'}(x)}{\partial x^i_{tsa}}.
    \end{equation}
\end{assumption}
\begin{remark}
Assumption~\ref{ass:potential_existence} \change{is equivalent to} $F$ \change{being} conservative: $\forall \ x_1, x_2 \in \{x^i_{tsa}\ | \ (i, t, s, a) \in [N] \times \mc{T} \times [S] \times [A]\}$,   
\begin{equation}\label{eqn:potential_conservative}
    {\partial^2 F(x)}/{\partial x_1\partial x_2} = {\partial^2 F(x)}/{\partial x_2\partial x_1}.
\end{equation}
In other words, the Jacobian of $\xi$~\eqref{eqn:cost_vector}, ${\partial\xi(x)}/{\partial x}$, is symmetrical.
\end{remark}
\noindent Verifying the existence of $F$~\eqref{eqn:potential_first_order} is non-trivial. However, if $F$ exists, the \change{solution} of~\eqref{eqn:convex_opt_eqn} is the Nash equilibrium~\cite{calderone_finite}.  
\begin{theorem}\label{thm:kkt_pts} 
\change{If the player costs $\{\ell^{i}\}_{i \in [N]}$~\eqref{eqn:player_cost_def} satisfy Assumption~\ref{assum:positive_definite},  
\begin{enumerate}
    \item the potential function (Definition~\ref{def:potential}) exists,
    \item $\hat{x}$~\eqref{eqn:joint_state_action} is the global optimal solution of~\eqref{eqn:convex_opt_eqn} if and only if $\hat{x}$ is a Nash equilibrium (Definition~\ref{def:wardrop}). 
\end{enumerate}
}
\end{theorem}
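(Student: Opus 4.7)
The plan is to split the argument into the two parts of the theorem. For existence of $F$, I would interpret $\nabla \xi(x)\succ 0$ in Assumption~\ref{assum:positive_definite} in the usual symmetric-positive-definite sense, so that the Jacobian of $\xi$ is symmetric for free. This is exactly the cross-derivative condition~\eqref{eqn:potential_second_order} required by Assumption~\ref{ass:potential_existence}, so the cited existence result applies and $F$ may be written concretely as a line integral, e.g.\ $F(x)=\int_0^1 \xi(tx)^\top x\,\mathrm{d}t$, whose well-definedness is independent of the integration path precisely because of the Jacobian symmetry. The same positive-definite Hessian then gives strict convexity of $F$ on the polytope $\prod_{i\in[N]}\mc{X}(P^i, z^i_0)$ that defines~\eqref{eqn:convex_opt_eqn}, so the minimization has a unique global optimizer characterised by its KKT conditions.

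For the equivalence in (2), I would exploit the key structural fact that the constraints of~\eqref{eqn:convex_opt_eqn} are \emph{separable} across players: each $x^i$ lives only in its own feasible set $\mc{X}(P^i, z^i_0)$. Consequently the block of the joint stationarity condition at $\hat{x}$ associated with player $i$ reads $\nabla_{x^i} F(\hat{x}) = \ell^i(\hat{x})$ against the constraints $\mc{X}(P^i, z^i_0)$, which coincides verbatim with the KKT system of player $i$'s individual MDP~\eqref{eqn:individual_player_mdp} with $\hat{x}^{-i}$ held fixed. Invoking the preceding Bellman-type theorem, player $i$'s KKT block is equivalent to $\hat{x}^i$ satisfying~\eqref{eqn:individual_optimality}, i.e.\ player $i$'s Nash condition. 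Thus if $\hat{x}$ is the global optimum of~\eqref{eqn:convex_opt_eqn} then strict convexity gives the joint KKT conditions, which by block decomposition yield~\eqref{eqn:individual_optimality} for every $i\in[N]$ and hence a Nash equilibrium; conversely, if $\hat{x}$ is a Nash equilibrium then each $\hat{x}^i$ satisfies player $i$'s KKT system, and reassembling the blocks reproduces the joint KKT conditions of~\eqref{eqn:convex_opt_eqn}, which strict convexity promotes to global optimality.

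The main obstacle I anticipate is conceptual rather than computational: one must carefully justify that $\nabla\xi\succ 0$ already supplies the symmetry needed to invoke Assumption~\ref{ass:potential_existence}, and then correctly identify the stationarity block of the joint problem with the stationarity condition of an individual player's MDP. Once the separable constraint structure is exploited, the translation of stationarity into the Bellman characterisation~\eqref{eqn:individual_optimality} is imported directly from the preceding theorem, so no new KKT computation is required.
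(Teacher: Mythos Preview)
Your proposal is correct and follows essentially the same route as the paper: derive symmetry of $\nabla\xi$ from Assumption~\ref{assum:positive_definite} to obtain the potential, then use the separability of the constraint sets $\mc{X}(P^i,z^i_0)$ to identify each player's optimality condition~\eqref{eqn:individual_optimality} (via Proposition~A\ref{prop:dp_to_minimizer}) with a block of the joint first-order conditions, with strict convexity of $F$ closing both directions. The only cosmetic difference is that the paper phrases the reverse direction as ``jointly coordinate-wise optimal $\Rightarrow$ global optimal'' and cites \cite[Thm~4.1]{tseng2001convergence}, whereas you argue it directly by reassembling the per-player KKT blocks into the joint KKT system; your version is slightly more self-contained but otherwise identical in content.
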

\begin{proof}
\change{We prove statement $1$ by showing that Assumption~\ref{assum:positive_definite} implies Assumption~\ref{ass:potential_existence}: if $\nabla \xi(x)\succ 0$ for all feasible joint state-action distributions $x$~\eqref{eqn:joint_state_action}, then $\nabla \xi(x)$ is symmetrical and satisfies~\eqref{eqn:potential_second_order}.}
\change{Next, we show the forward direction of the statement $2$. If $(\hat{x}^1,\ldots \hat{x}^N)$ minimizes~\eqref{eqn:convex_opt_eqn}, then for each $i\in [N]$, $\hat{x}^i$ minimizes~\eqref{eqn:mdp_variant} at $\hat{x}^{-i}$. From Proposition A\ref{prop:dp_to_minimizer}, $\hat{x}^i$ satisfies~\eqref{eqn:individual_optimality} for all $i\in [N]$, therefore $\hat{x}$ is a Nash equilibrium. To show the reverse direction of $2$, if~\eqref{eqn:individual_optimality} is satisfied for all $i \in [N]$, $\hat{x}^i$ is coordinate-wise optimal for coordinate $i$ (Proposition A\ref{prop:dp_to_minimizer}). Under Assumption~\ref{assum:positive_definite},~\eqref{eqn:convex_opt_eqn} has a strictly convex differentiable objective with separable convex constraints $\mc{X}(P^i, z^i_0)$---each $x^i$ is constrained independently of $x^j$, $\forall j \in[N]/\{i\}$, then the jointly coordinate-wise optimal $\hat{x}$ is the global optimal solution of~\eqref{eqn:convex_opt_eqn}~\cite[Thm 4.1]{tseng2001convergence}. }
\end{proof}
\subsection{Path Coordination as an MDP Congestion Game}\label{sec:path_coordination_costs}
\noindent We now model the path coordination problem as an MDP congestion game and demonstrate how players can achieve individual objectives while avoiding each other.

\noindent To reflect the congestion level of each state-action, we first define a \textbf{congestion distribution} as the weighted sum of individual state-action distributions.
\begin{equation}\label{eqn:congestion_distribution}
    \textstyle y := \sum_{i\in[N]} \alpha_ix^i \in \reals^{(T+1)SA}, \ \alpha_i > 0, \  \forall i \in [N],
\end{equation}
\noindent where $\alpha_i$ is player $i$'s \emph{impact factor}. If all players contribute to congestion equally, $\alpha_i  = 1 \ \forall i \in [N]$.  

\noindent\textbf{Player costs}. \change{We derive a class of player costs that satisfy Assumption~\ref{assum:positive_definite}, incorporate congestion-based penalties, and enable players to pursue individual objectives. For all $(i,t,s,a) \in [N]\times\mc{T}\times[S]\times[A]$, the player cost is given by }
\begin{equation}\label{eqn:individual_costs}
  \textstyle\ell^i_{tsa}(y, x^i) = \alpha_i f_{ts}\big(\sum_{a'} y_{tsa'}\big) + \alpha_i g_{tsa}\big(y_{tsa}\big) + h^i_{tsa}(x^i_{tsa}),
\end{equation}
where \change{$\alpha_i$ is the same as in~\eqref{eqn:congestion_distribution}, $f_{ts}:\reals \mapsto\reals$ is the state-dependent congestion and takes the congestion level of $(t,s)$ as input, $g_{tsa}:\reals \mapsto\reals$ is the state-action-dependent congestion and takes the congestion level of $(t,s,a)$ as input, and $h^i_{tsa}:\reals\mapsto\reals$ is the player-specific objective and takes player $i$'s probability of being in $(t,s,a)$ as input. Player-specific objectives such as obstacle avoidance and target reachability can be incorporated as constant offsets in $h^i$}. 
\begin{remark}[Effect of $\alpha_i$]
The impact factor $\alpha_i$ scales player $i$'s \change{relative impact on the total congestion and the total congestion's impact on player $i$. When $\alpha_i < \alpha_j$, player $i$ impacts congestion less and cares about the congestion less than player $j$. When $\alpha_i > \alpha_j$, player $i$ impacts congestion more and cares about the congestion more than player $j$. }
\end{remark}
\noindent The potential function~\eqref{eqn:potential_first_order} of the game with costs~\eqref{eqn:individual_costs} is
\begin{equation}
 \begin{aligned}\label{eqn:state_state_action_potential}
    F(x)= & 
\textstyle   \sum_{t,s} \int_0^{\sum_{a'}y_{tsa'}}f_{ts}(u) \partial u + \sum_{t,s,a} \int_0^{y_{tsa}}g_{tsa}(u) \partial u   \\
    & \textstyle 
    + \sum_{i, t, s, a} \int_{0}^{x^i_{tsa}} h^i_{tsa} (u)\partial u.
\end{aligned}   
\end{equation}
\begin{remark}
\change{Congestion costs $f$ and $g$ must be identical for all players in order for a potential (Definition~\ref{def:potential}) to exist.}
\end{remark}
\begin{example}[Road-sharing Vehicles]
\change{
Consider a sedan (player $1$, $\alpha_1=1$) and a trailer (player $2$, $\alpha_2=2$) sharing a road network modeled by $[S]\times [A]$. Player $i$ wants to reach state $s_i \in [S]$. The player-specific objective is $h^i_{tsa}(x^i_{tsa}) = -\mathbb{1}[s = s_i] + \epsilon_i x^i_{tsa}$, where $\mathbb{1}[w]$ is $1$ when $w$ is true and $0$ otherwise. The term $\epsilon_i x^i_{tsa}$ where $\epsilon_i > 0$ encourages player $i$ to randomize its policy over all optimal actions. Players experience state-based congestion as $f_{ts}(w) =\exp(w)$. The player cost~\eqref{eqn:individual_costs} is $\ell^i_{tsa}(y, x^i) = \alpha_i\exp(\sum_{a'} y_{tsa'}) + \epsilon_ix^i_{tsa} -\mathbb{1}[s = s_i]$.  }
\end{example}

\begin{corollary}\label{cor:unique_ne}
Player costs of form~\eqref{eqn:individual_costs} \change{satisfy Assumption~\ref{assum:positive_definite} if  $h^i_{tsa}(\cdot)$ is strictly increasing and $f_{ts}(\cdot)$, $g_{tsa}(\cdot)$ are non-decreasing} $\forall (i,t,s,a)\in [N]\times\mc{T}\times[S]\times[A]$.
\end{corollary}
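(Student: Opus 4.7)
The plan is to verify Assumption~\ref{assum:positive_definite} directly by computing the Jacobian $\nabla\xi(x)$ and showing that the associated quadratic form is strictly positive for every nonzero perturbation. Because $y = \sum_j \alpha_j x^j$, the chain rule gives $\partial y_{tsa}/\partial x^j_{tsa} = \alpha_j$, so differentiating~\eqref{eqn:individual_costs} yields
\begin{equation*}
\frac{\partial \ell^i_{tsa}}{\partial x^j_{t's'a'}} = \alpha_i\alpha_j f'_{ts}\!\Bigl(\textstyle\sum_{a''}y_{tsa''}\Bigr)\mathbb{1}[t{=}t',s{=}s'] + \alpha_i\alpha_j g'_{tsa}(y_{tsa})\mathbb{1}[t{=}t',s{=}s',a{=}a'] + (h^i_{tsa})'(x^i_{tsa})\mathbb{1}[i{=}j,t{=}t',s{=}s',a{=}a'].
\end{equation*}
This already shows symmetry of $\nabla\xi$, reconfirming Assumption~\ref{ass:potential_existence}, and sets up the positive-definiteness argument.

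Next I would fix an arbitrary nonzero $v \in \reals^{N(T+1)SA}$ indexed as $v^i_{tsa}$, substitute the Jacobian above into $v^\top\nabla\xi(x)v$, and regroup the three contributions. Using the fact that the $f$-part is diagonal only in $(t,s)$, the $g$-part is diagonal in $(t,s,a)$, and the $h$-part is diagonal in $(i,t,s,a)$, the cross-sums collapse into squares:
\begin{equation*}
v^\top\nabla\xi(x)v = \sum_{t,s} f'_{ts}\!\Bigl(\textstyle\sum_{a''}y_{tsa''}\Bigr)\Bigl(\sum_{i,a}\alpha_i v^i_{tsa}\Bigr)^{\!2} + \sum_{t,s,a} g'_{tsa}(y_{tsa})\Bigl(\sum_{i}\alpha_i v^i_{tsa}\Bigr)^{\!2} + \sum_{i,t,s,a}(h^i_{tsa})'(x^i_{tsa})\,(v^i_{tsa})^2.
\end{equation*}

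Finally I would invoke the stated monotonicity hypotheses: $f'_{ts}\geq 0$ and $g'_{tsa}\geq 0$ make the first two sums nonnegative, while $(h^i_{tsa})'>0$ makes the third a strictly positive-definite diagonal quadratic form in $v$. Hence whenever $v\neq 0$ the third sum is strictly positive, so $v^\top\nabla\xi(x)v>0$, giving $\nabla\xi(x)\succ 0$ and establishing Assumption~\ref{assum:positive_definite}.

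The only subtle step is the algebraic regrouping of the $f$ and $g$ double sums into squares; the rest is bookkeeping with the chain rule. Strict positivity is inexpensive because it is carried entirely by the player-specific term $h^i_{tsa}$, so no strict monotonicity of $f$ or $g$ is needed — exactly matching the hypotheses of the corollary.
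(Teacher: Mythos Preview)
Your argument is correct and is essentially the same as the paper's: both decompose $\nabla\xi(x)$ into three contributions from $f$, $g$, and $h$, and observe that the first two are positive semidefinite while the third is strictly positive definite. The only difference is cosmetic---the paper packages the decomposition in matrix form as $\nabla\xi(x)=J^\top\nabla f(Jx)\,J+M^\top\nabla g(Mx)\,M+\nabla h(x)$ with $M=\vec{\alpha}\otimes I_{(T+1)SA}$ and $J=(I_{(T+1)S}\otimes\ones_A^\top)M$, whereas you compute the entries of the Jacobian and the quadratic form $v^\top\nabla\xi(x)v$ explicitly; expanding $(Jv)^\top\nabla f\,(Jv)$ and $(Mv)^\top\nabla g\,(Mv)$ gives exactly your two squared sums.
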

\begin{proof}
 \change{Let $I_Z$ be an identity matrix of size $Z\times Z$, $\ones_Z$ be a ones vector of size $Z\times 1$, $\vec{\alpha} = [\alpha_1,\ldots,\alpha_N] \in \reals^{N\times 1}$, $h(x) = [h^1(x),\ldots,h^N(x)] \in \reals^{N(T+1)SA}$, and $\otimes$ be a kronecker product. We define the matrices $M = \vec{\alpha} \otimes I_{(T+1)SA}$ and $J = (I_{(T+1)S}\otimes \ones_A^\top)M$, and verify that $Mx = y$, $[Jx]_{ts} = \sum_{a'}y_{tsa'}$ $\forall (t,s) \in \mc{T}\times[S]$, and $\xi(x) = J^\top f(Jx) + M^\top g(Mx) + h(x)$. Let $w = Jx$, we can take $\xi$'s gradient as
$
\textstyle \nabla \xi(x) = 
 J^\top \nabla f(w) J + 
 M^\top \nabla g(y) M + \nabla h(x).
 $
Under Corollary assumptions,
$\nabla f(w)$ and $\nabla g(y)$ are non-negative diagonal matrices and $\nabla h(x) $ is a strictly positive diagonal matrix. Therefore}, $\nabla \xi(x) \succ 0$. 
\end{proof}
\begin{remark}
Corollary~\ref{cor:unique_ne} implies that a strictly increasing $h^i$ is crucial to ensuring a unique Nash equilibrium. 
Therefore, $h^i$ can be interpreted as a regularization term.
\end{remark}
\subsection{Frank-Wolfe Learning Dynamics}
\noindent We find the Nash equilibrium of MDP congestion games by {leveraging} single-agent dynamic programming. 
\begin{algorithm}[ht!]
\caption{Frank-Wolfe with dynamic programming}
\begin{algorithmic}[1]
\Require \(\{\ell^i\}_{i\in[N]}\), \(\{P^i\}_{i\in[N]}\), \(\{z^i_0\}_{i\in[N]}\), \(N\), \([S], [A], \mc{T}\).
\Ensure \(\change{\{\hat{x}^{i}_{tsa}\}_{t \in \mc{T}, s\in[S], a \in [A]}}\).
\State{\( x^{i0}\in\mc{X}(P^i, z^i_0) \in \reals^{(T+1)SA}, \quad \forall \ i \in [N]. \)}

\For{\(k = 1, 2, \ldots, \)}
\For{\(i = 1,\ldots,N\)}
    \State{\(\change{C^{ik}} = \ell^i([x^{1k},\ldots, x^{Nk}])\)}\label{alg:cost_retrieval}
	\State{\(\pi^i\) = \text{MDP}(\change{\(C^{ik}\)}, \(P^i\), \([S]\), \([A]\), \(T\), \change{\(z^i_0\)})}\label{alg:mdp}
	\State{\(b^{ik} = \) \Call{RetrieveDensity}{\(P\), \(z^i_0\), \(\pi^i\)}}\Comment{Alg.~\ref{alg:density}}
	\State{\(x^{i(k+1)}= (1 -  \frac{2}{k+1})x^{ik} +  \frac{2}{k+1} b^{ik}\)}
\EndFor
\EndFor
\end{algorithmic}
\label{alg:frank_wolfe}
\end{algorithm}

\noindent In Algorithm~\ref{alg:frank_wolfe}, each player can access an \emph{oracle} \change{that} returns the cost for a given joint state-action distribution. 
\change{In line~\ref{alg:mdp}, $\pi^i \in [A]^{(T+1)S}$ is any deterministic policy that solves the finite time MDP with cost $C^{ik}$, transition probability $P^i$, and initial distribution $z^i_0$. We use value iteration to recursively find $\pi^i$ as}
\change{\begin{equation}
\begin{aligned}\label{eqn:value_iteration}
    V^i_{Ts} & = \textstyle\min_{a}C^{ik}_{Tsa}, \ \pi^i_{Ts} \in  \argmin_{a}C^{ik}_{Tsa}, \\
    V^i_{(t-1)s} &  \textstyle = \min_{a} C^{ik}_{(t-1)sa} + {\sum_{s'}}P^i_{ts'sa}V^i_{ts'} \ \forall t \in [T] \\
    \pi^i_{(t-1)s} & \textstyle \in \argmin_{a} C^{ik}_{(t-1)sa} + {\sum_{s'}}P^i_{ts'sa}V^i_{ts'} \ \forall t \in [T]
\end{aligned}
\end{equation}}
\noindent \change{Algorithm~\ref{alg:frank_wolfe}} then retrieves the corresponding state-action density $b^{ik}$ via Algorithm~\ref{alg:density} and \change{combines it} with the current state-action density $x^{ik}$ to derive the next joint state-action density. All steps within lines 4 to 7 are parallelizable.

\begin{algorithm}[ht!]
\caption{Retrieving \change{state-action distribution} from \change{$\pi$} }
\begin{algorithmic}[1]
\Require \(P\), \(z\), \(\pi\).
\Ensure \(\{d_{tsa}\}_{t \in\mc{T}, s\in[S], a\in[A]}\)
\State{\(d_{0s\pi_{0s}} = z_s, \ \forall s \in [S]\)}
\For{\(t=1, \ldots, T\)}
		    \State{\(d_{ts(\pi_{ts})} = \textstyle\sum_a\sum_{s'} P_{tss'a}d_{(t-1)s'a}, \ \forall\ s \in [S]\)}
\EndFor 
\end{algorithmic}
\label{alg:density}
\end{algorithm}

\begin{theorem}
Under Assumption~\ref{assum:positive_definite}, 
Algorithm~\ref{alg:frank_wolfe} converges towards the Nash equilibrium \change{$\hat{x} = (\hat{x}^1,\ldots, \hat{x}^N)$} as 
\begin{equation}\label{eqn:alg_convergence}
   \textstyle \frac{\alpha}{2}\sum_{i\in[N]} \norm{x^{ik} - \change{\hat{x}^{i}}}^2_2 \leq \frac{2C_F}{k+2} 
\end{equation}
where $C_F$ is the potential function $F$'s~\eqref{eqn:potential_first_order} \emph{curvature constant} given by
\[ C_F = \underset{\substack{x^i, s^i \in \mc{X}(P^i, z^i_0) \\  \gamma \in [0, 1]\\
w^i  = x^i + \gamma(s^i - x^i)}}{\sup} \frac{2}{\gamma^2} \Big(F(\change{s}) - F(x) - \sum_{i \in [N]}(x^i - w^i)^\top \ell^i(x)\change{\Big)}.\]
\end{theorem}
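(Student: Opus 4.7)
The plan is to recognize Algorithm~\ref{alg:frank_wolfe} as the classical Frank--Wolfe method applied to the potential minimization problem~\eqref{eqn:convex_opt_eqn}, then invoke the standard $O(1/k)$ rate on the function value gap and translate it into the iterate gap via strong convexity of $F$.

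First, I would argue that the feasible set $\prod_{i\in[N]}\mc{X}(P^i, z^i_0)$ is a compact convex polytope (a product of polytopes), and by Theorem~\ref{thm:kkt_pts} the Nash equilibrium $\hat{x}$ coincides with the unique minimizer of $F$ over this set. Next, I would verify that the inner loop precisely implements a Frank--Wolfe step: because $\nabla F(x) = \xi(x) = (\ell^1(x),\ldots,\ell^N(x))$ by Definition~\ref{def:potential}, the linearized subproblem $\min_{x\in\prod_i\mc{X}(P^i, z^i_0)}\langle \nabla F(x^k), x\rangle$ separates across players into $N$ linear MDPs $\min_{x^i\in\mc{X}(P^i,z^i_0)}\langle \ell^i(x^k), x^i\rangle$. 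Lines~\ref{alg:cost_retrieval}--\ref{alg:mdp} solve each such MDP via value iteration~\eqref{eqn:value_iteration}, Algorithm~\ref{alg:density} converts the resulting deterministic policy $\pi^i$ into the corresponding vertex state-action distribution $b^{ik}$, and the update $x^{i(k+1)}=(1-\tfrac{2}{k+1})x^{ik}+\tfrac{2}{k+1}b^{ik}$ is the Frank--Wolfe iteration with canonical step size $\gamma_k = 2/(k+1)$.

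I would then appeal to the classical Frank--Wolfe convergence theorem: for a differentiable convex objective over a compact convex set with curvature constant $C_F$ (as defined in the theorem statement), the iterates satisfy $F(x^k) - F(\hat{x}) \leq \tfrac{2 C_F}{k+2}$; this is proved by the standard inductive argument using $\gamma_k = 2/(k+1)$ together with the descent lemma derived from the definition of $C_F$. To convert this function-value bound into the claimed distance bound, I would establish $\alpha$-strong convexity of $F$. Since $\nabla^2 F(x) = \nabla \xi(x)\succ 0$ for every $x$ in the feasible set (Assumption~\ref{assum:positive_definite}) and $\nabla\xi$ is continuous on a compact set, the minimum eigenvalue is attained and strictly positive; call it $\alpha$. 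Then $F(x^k) - F(\hat{x}) \geq \tfrac{\alpha}{2}\|x^k-\hat{x}\|_2^2 = \tfrac{\alpha}{2}\sum_{i\in[N]}\|x^{ik}-\hat{x}^i\|_2^2$, and chaining this with the Frank--Wolfe bound yields~\eqref{eqn:alg_convergence}.

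The main obstacle I anticipate is being careful about the constant $\alpha$: Assumption~\ref{assum:positive_definite} is only a pointwise positive-definiteness assertion, so I would need the compactness argument above to extract a uniform strong-convexity modulus. A secondary subtlety is that Frank--Wolfe is usually analyzed with exact line search or the default $\gamma_k$ schedule; since Algorithm~\ref{alg:frank_wolfe} uses the latter, the standard proof applies directly, but I would want to check that the vertex $b^{ik}$ returned by value iteration is a true minimizer of the linearized objective (this follows because value iteration returns an optimal deterministic policy for a finite-horizon linear MDP, and the optimum of a linear program over $\mc{X}(P^i, z^i_0)$ is attained at a vertex induced by such a policy).
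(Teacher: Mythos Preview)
Your proposal is correct and follows essentially the same line as the paper's proof: the paper simply observes that Algorithm~\ref{alg:frank_wolfe} is an instance of the Frank--Wolfe scheme of \cite[Alg.~2]{jaggi2013revisiting}, invokes \cite[Thm.~1]{jaggi2013revisiting} for the bound $F(x^k)-F(\hat{x})\le 2C_F/(k+2)$, and then uses strong convexity of $F$ (from Assumption~\ref{assum:positive_definite}) to pass to the iterate bound. Your treatment is in fact more careful than the paper's on one point---you extract a uniform strong-convexity modulus $\alpha$ via compactness, whereas the paper only notes $\nabla\xi(\hat{x})\succ 0$ at the equilibrium.
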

\begin{proof}
Algorithm~\ref{alg:frank_wolfe} is a straight-forward implementation of~\cite[Alg.2]{jaggi2013revisiting}. 
    \change{From Assumption~\ref{assum:positive_definite},  $\nabla \xi(\hat{x}) \succ 0$. Therefore}, the potential function $F$ is \change{strongly} convex and satisfies $ \frac{\alpha}{2}\sum_{i\in[N]} \norm{x^{ik} - \change{\hat{x}^{i}}}^2_2 \leq F(x^{k}) - F(\hat{x})$. Equation~\eqref{eqn:alg_convergence} then follows directly from~\cite[Thm.1]{jaggi2013revisiting}.
\end{proof}
\begin{remark}[Scalability]
Algorithm~\ref{alg:frank_wolfe} has linear complexity in the number of players.
\end{remark}
\section{Multi-agent path coordination}\label{sec:mapf}
\noindent We apply our game model to a multi-agent pick up and delivery scenario \change{with stochastic package arrival times}. As shown in Figure~\ref{fig:grid_world}, $N$ players navigate a 2D \change{space}. \change{Each player's goal is} to  transport packages from the pick up chutes to the drop off chutes while \change{avoiding collision with others}. Code for the simulation is available at \url{https://github.com/lisarah/mdp_path_coordination}.
\begin{figure}
    \centering
    \vspace*{0.3cm} 
    \includegraphics[width=0.7\columnwidth]{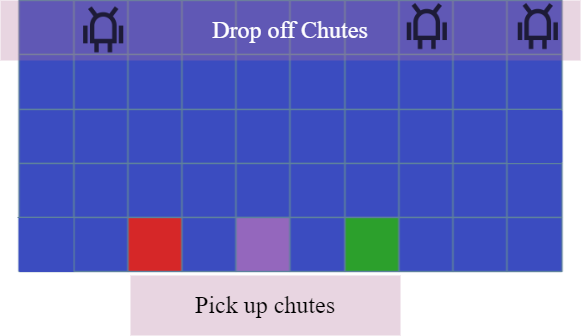}
    \caption{Operation environment for multi-robot warehouse scenario.}
    \label{fig:grid_world}
\end{figure}
\subsection{Stationary MDP Model}
\change{Players operate in a} two dimensional grid world with $5$ rows and $10$ columns. In addition to capturing location, each state also dictates whether the robot is in pick up or delivery mode. The state space is given by
\[[S] = \Big\{ (\change{v, w}, m) \ | \ 1\leq \change{v} \leq 5, \ 1\leq \change{w} \leq 10, \ m \in \change{\{1,2\}}\Big\}.\]
At each state, available actions are $[A] = \{u, d, r, l, s\}$, corresponding to up, down, right, left, stay. Player transition dynamics and rewards are \emph{stationary} in time.
The transition probability of each state $(\change{v, w}, m)$ extends the location-based transition probabilities $P^0$. 

\noindent\textbf{Location-based transition}. Let $u =\change{(v, w)}$ denote the location component of the state. At each location, each action either points to a feasible target $u_{targ}(a)$ or is infeasible. The set of all feasible targets from $u$ is $\mc{N}(u)$. 
When a target exists, players have $1 > q > 0$ chance of reaching it and $1 - q$ chance of reaching other states in $\mc{N}(u)$. 
\begin{equation}\label{eqn:feasible_location_transitions}
P^0_{u'ua} = \begin{cases}
q & u' = u_{targ}(a), \\
\frac{1 - q}{|\mathcal{N}(u)|} & u' \in \mathcal{N}(u)/\{u_{targ}(a)\},\\
0 & \text{otherwise}.
\end{cases}
\end{equation}
When the target location is infeasible, the player transitions into a neighboring state $u' \in \mc{N}(u)$ at random.
\begin{equation}\label{eqn:infeasible_location_transitions}
P^0_{u'ua} = \begin{cases}
\frac{1}{|\mathcal{N}(u)|} & u' \in \mathcal{N}(u),\\
0 & \text{otherwise}.
\end{cases}
\end{equation}
\textbf{Full transition dynamics}. \change{Within the same mode}, players transition between locations via dynamics $P^0$. Player modes \change{transition at pick up chutes $\mc{P}$ and drop off chutes $\mc{D}$}.
\change{
\begin{enumerate}
    \item When player $i$ is in mode $1$ (pick up) and about to transition into pick chute $p^i \in \mc{P}$, player $i$'s mode has $r^i$ probability of switching to mode $2$ (drop off). 
    \[\begin{cases}
     P^i_{t(p^i, 2)sa} = r^iP^0_{tp^iua},\\
     P^i_{t(p^i, 1)sa} = (1 - r^i)P^0_{tp^iua},
    \end{cases}\  \forall s = (u, 1), \ s \in [S].\]
    \item When player $i$ is in mode $2$ (drop off) and about to transition into drop chute $d^i \in \mc{D}$, player $i$ transitions to mode $1$ with probability $1$.
    \[\textstyle \begin{cases}
     \textstyle P^i_{t(d^i, 1)sa} = P^0_{td^iua},\\
     \textstyle P^i_{t(p^i, 2)sa} = 0,
    \end{cases}\  \forall s = (u, 2), \ s \in [S].\]
\end{enumerate}
}
Here, $r^i \in \reals$ denotes the probability of package arrival when player $i$ is in \change{$p^i$}. Modeled as an independent Poisson process with rate $\lambda_i$ and interval $\Delta t = 1s$, $r^i = \exp(-\lambda_i \Delta_t)$.

\subsection{Player Costs}
\noindent For all $(t,s, a) \in \mc{T}\times[S]\times[A]$ and congestion distribution $y$~\eqref{eqn:congestion_distribution}, player $i$'s \change{cost is given by
\[\textstyle\ell_{tsa}^i(y, x^i) = \epsilon x^i_{tsa} - c^i_{tsa} + \alpha_i f_{ts}(y).\]}
\noindent The player-specific objective $c^i_{tsa}$ is defined as 
\begin{equation}
c^i_{t(\change{v, w}, m) a} = \begin{cases}
1 & \change{(v, w)  = p^i}, \ m = \change{1}, \\
1 & \change{(v, w)  = d^i}, \ m = \change{2}, \\
0 & \text{otherwise.}
\end{cases}
\end{equation}
The congestion function is strictly state-based and is an exponential function given by
\begin{equation}\label{eqn:simulation_congestion_function}
f_{t\change{(v, w, m)}}(y) = -\beta\exp\big(\beta(\sum_{m' \in \change{\{1,2\}}}\sum_{a' \in [A]}y_{t(\change{v, w}, m')a'} - 1)\big),
\end{equation}
where $\alpha_i> 0$ for all $(t,s,a) \in \mc{T}\times[S]\times[A]$. \change{As opposed to~\eqref{eqn:individual_costs}, function~\eqref{eqn:simulation_congestion_function} calculates the congestion in $(v,w, \cdot)$ using both $(v,w, 1)$'s and $(v, w, 2)$'s congestion level. }
\begin{figure}
    \centering
    \vspace*{0.3cm} 
    \includegraphics[trim={0 0 0 0.5cm},width=0.85\columnwidth]{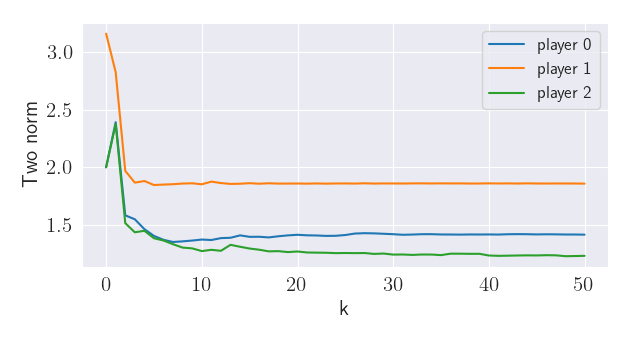}
    \caption{$\norm{\cdot}_2$ of player $i$'s state-action distribution over Algorithm~\ref{alg:frank_wolfe} iterations.}
    \label{fig:frank_wolfe}
\end{figure}
\subsection{Simulation Results}
\noindent We simulate the path coordination game using parameters from Table~\ref{tab:sim_params}. Player $i$'s pick up locations is the $i^{th}$ element of $ \mc{P}= \{(4, \change{w^i}) \ | \ \change{w^i} \in [8, 7, 2]\}$, \change{and its drop-off location is the $i^{th}$ element of $\mc{D} = \{(0, w^i) \ | \ w^i \in [4,5,8]\}$}. At $t = 0$, players are initialized \change{at their drop off location. }
\begin{table}[h!]
\begin{center}
\begin{tabular}{|ccccccccc|}
\hline
$N$  & $q$ & $\gamma_i$ & $\lambda_i$ & \change{$\alpha_i$} &  $\Delta t$ & $T$ & \change{$\epsilon$} & \change{$\beta$} \\
\hline
3 & 0.98 & 0.99& 0.5 & \change{\{0.5, 1, 1.5\}} & 1s & \change{120s} & \change{1e-3} & \change{40}\\
\hline
\end{tabular}
\end{center}
\caption{Parameters for simulation environment.}\label{tab:sim_params}
\end{table}

\noindent We run Algorithm~\ref{alg:frank_wolfe} for $100$ iterations, where line~\ref{alg:mdp} is solved via value iteration~\eqref{eqn:value_iteration}. \change{The two norm of $x^i$ is shown in Figure~\ref{fig:frank_wolfe} as a function of the algorithm iterations. We see that the state-action densities stabilize in about $20$ steps.} Performance is evaluated by: 1) expected number of collisions, 2) expected packages delivery time, 3) worst package delivery time. The results over $100$ random trials are visualized in Figures~\ref{fig:collision_results} and~\ref{fig:waiting_time_results}. 
\begin{figure}
    \centering
    \vspace*{0.3cm} 
    \includegraphics[trim={0 0 0 0.5cm},width=0.8\columnwidth]{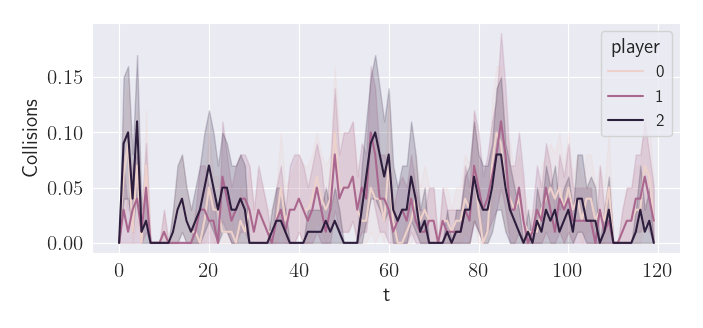}
    \caption{Collisions per player as a function of MDP time step $t$.}
    \label{fig:collision_results}
\end{figure}
\begin{figure}
    \centering
    \includegraphics[width=0.95\columnwidth]{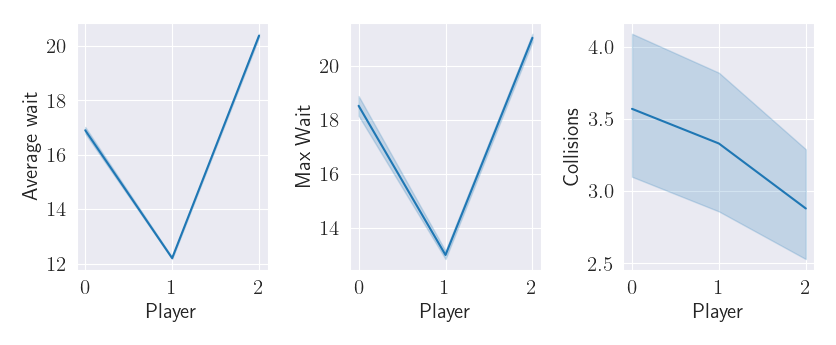}
    \caption{Average waiting time per package, worst case waiting time per package, and average number of collisions in $T$ for each player.}
    \label{fig:waiting_time_results}
\end{figure}

\noindent\change{
We compare the \emph{jointly optimal congestion-free wait time} computed using Algorithm~\ref{alg:frank_wolfe}, and compare them to the shortest \change{wait time} available in the absence of opponents.
}
Each path is the number of steps to complete the drop off-pick up-drop off cycle. \change{Based on players' pick-up and drop-off locations, their shortest \change{wait time} in the absence of opponents is $16$, $12$, $20$ respectively.} We note that this matches well with the average wait time shown in Figure~\ref{fig:waiting_time_results}.

We set the player impact factors as $\{0.5, 1, 1.5\}$ as in Table~\ref{tab:sim_params}. From Figure~\ref{fig:waiting_time_results}, the impact factors directly correlate with rate of collision players experience. Player $0$ impacts congestion the least and is the least sensitive to congestion. As a result, it encountered more collisions. Player $2$ impacts congestion the most and is the most sensitive to congestion. As a result, it encountered the least collisions. The collision rate is spread out evenly over $\mc{T}$ (Figure~\ref{fig:collision_results}).
\section{Conclusions}\label{conclusion}
We derived a class of $N$ player, weighted potential games under heterogeneous MDP dynamics and with application to multi-agent path coordination. For these games, we show equivalence between the unique Nash equilibrium and the global solution of a potential minimization problem, which we solve via gradient descent and single-player dynamic programming. Future work include deriving learning-based solutions for the games and \change{integrating partially observable scenarios in which players have local observations only.}

\bibliographystyle{IEEEtran}
\bibliography{reference}
\appendix
\begin{proposition}\label{prop:dp_to_minimizer} 
Under Assumption~\ref{assum:positive_definite}, consider the problem
\begin{equation}\label{eqn:mdp_variant}
    \min_{x^i} F(x^i, x^{-i}) \, \text{ s.t. } x^i \in \mc{X}(P^i, z_0^i).
\end{equation} 
where for $i \in [N]$ and $x^{-i}$, the objective $F:\reals^{N(T+1)SA}\mapsto\reals$ satisfies $\partial F(x^i, x^{-i})/\partial x^i= \ell^i(x^i, x^{-i})$ $\forall x^i \in \mc{X}(P^i, z_0^i)$, then $\hat{x}^{i}$ minimizes~\eqref{eqn:mdp_variant} if and only if  $Q^i(\hat{x}^{i}, x^{-i})$ in~\eqref{eqn:q_value} satisfies~\eqref{eqn:individual_optimality}. 
\end{proposition}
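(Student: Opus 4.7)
The plan is to treat~\eqref{eqn:mdp_variant} as a convex program in $x^i$ with linear constraints, and to identify the Lagrange multipliers of the flow-balance constraints with the negated min-Q-values $-\min_{a'} Q^i_{tsa'}$. Assumption~\ref{assum:positive_definite} implies that the $i$-th diagonal block of $\nabla\xi$ is positive definite, so $F(\cdot,x^{-i})$ is strictly convex in $x^i$; since the constraints $\mc{X}(P^i,z_0^i)$ are linear, the KKT conditions are both necessary and sufficient for $\hat{x}^i$ to minimize~\eqref{eqn:mdp_variant}, and the minimizer is unique.

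First I would form the Lagrangian by attaching multipliers $V^i_{0s}$ to the initial-mass equations $\sum_a x^i_{0sa}=z^i_{0s}$, multipliers $V^i_{ts}$ to the flow-balance equations in~\eqref{eqn:feasible_mdp_flows} for $t\in[T]$, and non-negative multipliers $\mu^i_{tsa}$ to $x^i_{tsa}\ge 0$. Using $\partial F/\partial x^i_{tsa}=\ell^i_{tsa}$, stationarity in $x^i_{tsa}$ becomes
\[
\ell^i_{tsa}(\hat{x}^i,x^{-i}) + V^i_{ts} - \textstyle\sum_{s'} P^i_{(t+1)s'sa}\, V^i_{(t+1)s'} \;=\; \mu^i_{tsa} \;\ge\; 0
\]
for $t<T$, with the terminal version $\ell^i_{Tsa}+V^i_{Ts}=\mu^i_{Tsa}\ge 0$, together with complementary slackness $\mu^i_{tsa}\hat{x}^i_{tsa}=0$.

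The key step is to identify $-V^i_{ts}$ with $\min_{a'} Q^i_{tsa'}(\hat{x}^i,x^{-i})$. Eliminating the non-negative slack turns stationarity into the Bellman-style inequality $-V^i_{ts}\le \ell^i_{tsa}+\sum_{s'}P^i_{(t+1)s'sa}(-V^i_{(t+1)s'})$ for every $a$, with equality on the support of $\hat{x}^i_{ts\cdot}$. Backward induction on $t$ from $T$ to $0$, matched against the recursion~\eqref{eqn:q_value}, then forces $-V^i_{ts}=\min_{a'} Q^i_{tsa'}(\hat{x}^i,x^{-i})$, and the complementary-slackness support condition becomes exactly~\eqref{eqn:individual_optimality}. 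The reverse direction runs the same induction in the opposite direction: given any $\hat{x}^i\in\mc{X}(P^i,z_0^i)$ satisfying~\eqref{eqn:individual_optimality}, defining $V^i_{ts}:=-\min_{a'}Q^i_{tsa'}(\hat{x}^i,x^{-i})$ and $\mu^i_{tsa}:=Q^i_{tsa}-\min_{a'}Q^i_{tsa'}\ge 0$ yields valid KKT multipliers, certifying that $\hat{x}^i$ minimizes~\eqref{eqn:mdp_variant}. The main obstacle is purely bookkeeping, namely keeping the time shift between~\eqref{eqn:q_value} and the flow constraints aligned through the induction; once that is fixed, strict convexity delivers both sufficiency of KKT and uniqueness.
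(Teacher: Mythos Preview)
Your approach is essentially the paper's: set up KKT for the convex program~\eqref{eqn:mdp_variant}, relate the equality-constraint multipliers to $\min_{a'}Q^i_{tsa'}$ by backward induction, and for the converse manufacture multipliers from $Q^i$ exactly as you describe (this is verbatim the paper's $(\Leftarrow)$ direction).

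The one point to tighten is your claim that the induction ``forces $-V^i_{ts}=\min_{a'}Q^i_{tsa'}$''. This equality is only forced at \emph{reached} states, i.e.\ those $(t,s)$ with $\sum_a\hat{x}^i_{tsa}>0$; at unreached states the KKT system yields only $-V^i_{ts}\le\min_{a'}Q^i_{tsa'}$, since complementary slackness gives no active action there. The paper deals with this by explicitly \emph{shifting} the raw multipliers $(\nu^i,\mu^i)$ to new ones $(\hat\nu^i,\hat\mu^i)$ via a recursively defined offset $\Delta^i_{ts}$, so that $\hat\nu^i_{ts}=\min_{a'}Q^i_{tsa'}$ holds at every $(t,s)$, and then reads off~\eqref{eqn:individual_optimality} directly. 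Your induction still closes without this shift, provided you carry the weaker hypothesis ``$-V^i_{ts}\le\min_{a'}Q^i_{tsa'}$, with equality when $(t,s)$ is reached'' and use the flow constraint to note that $\hat{x}^i_{tsa}>0$ and $P^i_{(t+1)s'sa}>0$ together force $s'$ to be reached at time $t+1$. That reachability observation is precisely the ``bookkeeping'' you allude to, and it is the only substantive step in the forward direction; the paper's condition $\mc{K}(t)$ encodes the same fact.
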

\begin{proof}
Because~\eqref{eqn:mdp_variant} has linear constraints and $\partial^2F(x)/\partial x_i^2$ $= \partial \ell^i(x)/\partial x_i\succ 0$ by assumption, \eqref{eqn:mdp_variant}'s unique minimizer satisfies the first order KKT conditions. 
\noindent Consider the dual variables $\mu^i $ $\in \reals^{(T+1)SA}_+$ for $x^i\geq 0$ and $\nu^i \in$ $ \reals^{(T+1)SA}$  for the equality constraints in $\mc{X}(P^i, z_0^i)$~\eqref{eqn:feasible_mdp_flows}. The Lagrangian of~\eqref{eqn:mdp_variant} is $L(x^i, \nu^i, \mu^i)  = F(x^i, x^{-i}) - \sum_{t,s,a} {\mu}^i_{tsa} x^i_{tsa} + \sum_{s} \nu^i_{0s}(x^i_{0s} - \sum_{a} x^i_{0sa}) + \sum_{s, t} \nu^i_{ts} (\sum_{s'a} P^i_{tss'a}x^i_{(t-1)sa} - \sum_{a}x^i_{tsa})$.
The KKT conditions are 1) primal feasibility $x^i \in \mc{X}(P^i, z^i_0)$, 2) dual feasibility $\mu^i \geq 0$, 3) complementary slackness $\mu^i_{tsa}x^i_{tsa} = 0, \ \forall (t, s, a) \in \mc{T}\times[S]\times[A]$, and 4) stationarity condition, given $\forall (t,s,a) \in \mc{T}\times[S]\times[A]$ as
\begin{equation}\label{eqn:og_kkt}
\begin{cases}
\ell^i_{tsa}(x) + \sum_{s'} P^i_{(t+1)s'sa}\nu^i_{(t+1)s'}  =  \nu^i_{ts}  + {\mu}^i_{tsa}& t \neq T \\
\ell^i_{Tsa}(x)  =  \nu^i_{Ts} + {\mu}^i_{Tsa}  & t = T.
\end{cases}
\end{equation}
We can show that $(\hat{x}^{i}, x^{-i})$ satisfies the KKT conditions above if and only if it satisfies~\eqref{eqn:individual_optimality}. To simplify notation, we use $Q^i_{tsa}$ to denote $Q^i_{tsa}(\hat{x}^{i}, x^{-i})$. 

\noindent ($\Rightarrow$): suppose $(\hat{x}^{i}, \nu^i, \mu^i)$ satisfies the KKT conditions. When $\hat{x}^i_{tsa}>0$, $\nu^i_{ts}$ represents the value function and $\nu^i_{ts} + \mu^i_{tsa}$ represents $Q$-value. When $\hat{x}^i_{tsa}=0$, we \emph{shift} $(\nu^i, \mu^i)$ to $(\hat{\nu}^i, \hat{\mu}^i)$ to generate the optimal Q-values. To this end, define $\lambda^i \in\reals^{(T+1)SA}$, $\Delta^i\in\reals^{(T+1)S}, \hat{\mu}^i \in \reals^{(T+1)SA}, \hat{\nu}^i\in \reals^{(T+1)S}$ recursively from $t = T$.  At $T=t$, let $\Delta^i_{(T+1)s'} = 0 \ \forall s' \in[S]$. All other variables are recursively defined as
\begin{equation}\label{eqn:pseudo_q_values}
    \begin{aligned}
        \lambda^i_{tsa}& \textstyle= \mu^i_{tsa} + \sum_{s'}P^i_{(t+1)s'sa}\Delta^i_{(t+1)s'},  \\
        \textstyle\Delta^i_{ts} &=\textstyle \min_{a'} \lambda^i_{tsa'}, \\
        \textstyle\hat{\mu}^i_{tsa} &= \textstyle\lambda^i_{tsa} - \Delta^i_{ts},\\
        \textstyle\hat{\nu}^i_{ts} &= \textstyle{\nu}^i_{ts} + \Delta^i_{ts}.
    \end{aligned}
\end{equation}
At time $t$, let the condition $\hat{x}^i_{tsa} > 0$ implies $\lambda^i_{tsa} = 0$ be denoted as $\mc{K}(t)$. We can show that $\mc{K}(t)$ implies $\mc{K}(t-1)$: from complementary slackness, $\hat{x}^i_{(t-1)sa} > 0$ implies $\mu^i_{(t-1)sa} = 0$. Subsequently, $\lambda^i_{(t-1)sa} = 0$~\eqref{eqn:pseudo_q_values} if $P^i_{ts'sa}\Delta^i_{ts'} = 0 \ \forall s' \in [S]$: either $P^i_{ts'sa}  = 0$ or $P^i_{ts'sa}\hat{x}^i_{(t-1)sa} = \sum_{a'}\hat{x}^i_{ts'a'}  > 0$. In the second case, there exists $a'\in [A]$ such that $\hat{x}^i_{ts'a'} > 0$, and if $\mc{K}(t)$ holds, $\lambda^i_{ts'a'} = 0$. By definition, $\Delta^i_{ts'}$ is non-negative and must be zero. We conclude that $P^i_{ts'sa}\Delta^i_{ts'} = 0$ $\forall s' \in [S]$, and $\mc{K}(t-1)$ holds. At $t=T$, $\hat{x}^i_{Tsa} > 0$ implies $\mu^i_{Tsa} = 0$ and $\lambda^i_{Tsa}= 0$. Therefore, $\mc{K}(t)$ holds $\forall t \in \mc{T}$.  

\noindent By adding  $\sum_{s'}P^i_{(t+1)s'sa}\Delta^i_{(t+1)s'}$ to~\eqref{eqn:og_kkt} and simplifying it via~\eqref{eqn:pseudo_q_values}, we obtain 
\begin{equation}\label{eqn:thm1_pf4}
\begin{cases}
\ell^i_{tsa}(x) + \sum_{s'} P^i_{(t+1)s'sa}\hat{\nu}^i_{(t+1)s} =  \hat{\nu}^i_{ts}  + \hat{\mu}^i_{tsa} & t \neq T \\
\ell^i_{Tsa}(x)  =  \hat{\nu}^i_{Ts} + \hat{\mu}^i_{Tsa} & t = T.
\end{cases}
\end{equation}
We define $Q^i_{tsa} = \hat{\nu}^i_{ts} + \hat{\mu}^i_{tsa}$. From~\eqref{eqn:pseudo_q_values}, $\hat{\mu}^i_{tsa}$ is always non-negative and $\hat{\mu}^i_{tsa'} = 0$ for some $a'\in[A]$. Therefore $\min_{a'}Q^i_{tsa'} = \hat{\nu}^i_{ts}$, and $Q^i$ substituted in~\eqref{eqn:thm1_pf4} satisfies~\eqref{eqn:q_value}. 

\noindent  If $\hat{x}^{i}_{tsa}> 0$, then from $\mc{K}(t)$, $\lambda^i_{tsa} =0$. Therefore, $\hat{\mu}^i_{tsa} = 0$ and $Q^i_{tsa} = \min_{a'}Q^i_{tsa'}$. We conclude that $Q^i$ satisfies~\eqref{eqn:individual_optimality}.

\noindent($\Leftarrow$): We show that if $Q^i$ satisfies~\eqref{eqn:individual_optimality}, then $\hat{x}^i$ satisfies the KKT conditions. Let $\nu^i_{ts} = \min_{a'}Q^i_{tsa'}$ and $\mu^i_{tsa} = Q^i_{tsa} - \nu^i_{ts}$ $\forall (t, s, a) \in \mc{T}\times[S]\times[A]$, then $(\hat{x}^{i}, \nu^i, \mu^i)$ is a KKT point.
Both $\hat{x}^{i}$ and $\mu^i$ satisfy primal/dual feasibility respectively. From~\eqref{eqn:individual_optimality}, $\hat{x}^{i}_{tsa} > 0$ implies that $\nu^i_{ts} = Q^i_{tsa}$ and $\mu^i_{tsa} = 0$. Since either $\hat{x}^{i}_{tsa} > 0$ or $\hat{x}^{i}_{tsa} = 0 $, complementary slackness $\hat{x}^{i}_{tsa} \mu^i_{tsa} = 0$ holds $\forall (t, s, a)\in \mc{T}\times[S]\times[A]$. Finally, the stationarity condition~\eqref{eqn:og_kkt} directly follows from~\eqref{eqn:q_value}. 
\end{proof}

\end{document}